\title{Improved Key Generation Algorithm for Gentry's Fully Homomorphic Encryption Scheme}
\author{Yang Zhang\and Renzhang Liu\and Dongdai Lin}
\institute{State Key Laboratory of Information Security,\\
    Institute of Information Engineering\\
    Chinese Academy of Sciences, Beijing 100093, China.\\
    \email{liurenzhang@iie.ac.cn}
    }
\date{}
\begin{document}
\maketitle

\begin{abstract}
At EUROCRYPT 2011, Gentry and Halevi implemented a variant of Gentry's fully homomorphic encryption scheme. The core part in their key generation is to generate an odd-determinant ideal lattice having a particular type of Hermite Normal Form. However, they did not give a rigorous proof for the correctness. We present a better key generation algorithm, improving their algorithm from two aspects.
\begin{itemize}
\item We show how to deterministically generate ideal lattices with odd determinant, thus increasing the success probability close to 1.
\item We give a rigorous proof for the correctness. To be more specific, we present a simpler condition for checking whether the ideal lattice has the desired Hermite Normal Form. Furthermore, our condition can be checked more efficiently.
\end{itemize}
As a result, our key generation is about 1.5 times faster. We also give experimental results supporting our claims. Our optimizations are based on the properties of ideal lattices, which might be of independent interests.

\end{abstract}
\begin{keywords}
  fully homomorphic encryption, key generation, Hermite Normal Form, ideal lattice
\end{keywords}

\section{Introduction}

    Fully homomorphic encryptions (FHE) support arbitrary operations on encrypted data which have a wide range of applications such as private information retrieval\cite{retrieval1}, electronic watermark\cite{watermark}. Ever since the introduction of the concept of FHE by Rivest et al \cite{RAD78}, there have appeared many homomorphic encryption schemes which support limited operations\cite{AD97,RSA78}. Nevertheless, all these schemes failed to be FHE.

    It was not until 2009 that Gentry \cite{Gen09} proposed the first plausible FHE scheme using ideal lattice. Since then, researchers proposed many FHE schemes \cite{BGV12,MW15,GSW13}. Nowadays, all existing FHE schemes follow Gentry's blueprint. Specifically, one first constructs a ``somewhat homomorphic" scheme that supports some limited operations on the encrypted data. Then he ``squashes" the decryption procedure so that it can be expressed as operations supported by the scheme. Finally, he converts the ``somewhat homomorphic" scheme into a fully homomorphic scheme. The low efficiency of FHE is one of the main obstacles that prevents it from being practical.

    The first attempt to implement Gentry's FHE scheme was made by Smart and Vercauteren\cite{SV10} in 2010. They gave an optimized version of Gentry's scheme, which decreased the key size by a linear factor. However, they needed to generate ideal lattices with prime determinant during the key generation. In practice, one may need to try as many as $n^{1.5}$ candidates before finding one with prime determinant when working with lattices in dimension $n$. As a result, they failed to implement Gentry's fully homomorphic encryption scheme.

    In 2011, Gentry and Halevi\cite{GH11} first completely implemented Gentry's fully homomorphic encryption scheme continuing in the same direction of the implementation of \cite{SV10}. The strong requirement that the lattice has a prime determinant was removed. Instead, they only required that the determinant is odd and the lattice has a simple-HNF (see Definition 2). They found in practice that the success probability was roughly 0.5. Thus one needs to try two candidates on average to get a valid key. Moreover, they proposed a method to check whether a lattice has a simple-HNF. However, they did not provide a rigorous proof for the correctness of their algorithm and some details on their implementation were not very clear.

    Gentry-Halevi's implementation depends heavily on the underlying algebraic structure of the scheme, that is, the 2-power cyclotomic fields. Scholl and Smart\cite{SS11} showed how to generalize the fast key generation techniques to arbitrary cyclotomic fields. They also obtained a key generation algorithm which is roughly twice faster.

    \subsubsection{Our Contribution} By studying the properties of ideal lattices, we present further improvements on the key generation algorithm in \cite{GH11}. Our algorithm has a rigorous proof for correctness and is about 1.5 times faster in practice. Our focus is on Gentry-Halevi's original implementation, which is different from \cite{SS11}.

    As stated before, the core idea in the key generation algorithm of \cite{GH11} is to generate an odd-determinant ideal lattice, which has a simple-HNF.

    In order to generate ideal lattice with odd determinant, the authors of \cite{GH11} chose a random generator and computed the determinant of the generated ideal lattice. Since the ideal lattice under consideration is highly structured and we only care about the parity of the determinant, it might be possible to determine whether the determinant is odd without computing it. In fact, we do find that the parity of the determinant is connected to the generator in a very simple way. Thus we can generate odd-determinant ideal lattice deterministically by imposing a simple constraint on the generator.

    To check whether the ideal lattice has the desired HNF, they gave a simple checking condition, which avoided computing the HNF explicitly out of the consideration of efficiency. By studying and exploiting the properties of the HNF of ideal lattice, we are able to present another checking condition. By verifying that our condition holds, we can rigorously prove that our algorithm outputs an ideal lattice with the desired HNF. Furthermore, our condition can be checked more efficiently.

    We need to point out that our improvements are of more theoretical than practical significance due to the following reasons. First, Gentry and Halevi's implementation is the very first one, and there are much better implementations of different schemes \cite{Sho,NTRU,DM15,GHS15}. Besides, key generation is not a bottleneck of FHE schemes and the algorithm in \cite{GH11} is very fast even for very large parameters, which makes our slight speedup less significant.

    Nevertheless, it is important to make things more clear. Besides, our improvements are based on some new, special properties of the HNF of ideal lattices, which we believe are of independent interests.

    \subsubsection{Roadmap} The paper is organized as follows. In Section 2, some preliminaries are presented. In Section 3, we revisit the key generation algorithms in previous implementations. In Section 4, we propose our optimized key generation for Gentry's FHE scheme, together with some theoretical analyses and experimental results. A brief conclusion will be given in the final section.

\section{Preliminaries}
\subsubsection{Notations}
We use bold capital and lowercase letters to denote matrices and vectors respectively. For a matrix $\bm{M}\in \mathbb{R}^{n\times m}$, the $i$-th row is denoted as $\bm{M}_{i}$ and the entry in $(i,j)$-th entry is denoted as $m_{i,j}.$ For a vector $\bm{x}\in \mathbb{R}^{n}$, the $i$-th entry is denoted as ${x}_{i}$. These notations are used throughout the paper unless specified otherwise.

\subsection{Hermite Normal Form}
For integer matrices, there is a very important standard form known as the Hermite Normal Form.
\begin{definition}[Hermite Normal Form]\label{HNF}
A nonsingular matrix $\bm{H} \in \mathbb{Z}^{n\times n}$ is said to be in HNF, if
    \begin{itemize}
    \item $h_{i,i} > 0$ for $1 \leq i \leq n$.
    \item $h_{j,i} = 0$ for $1\leq j < i \leq n$.
    \item $0 \leq h_{j,i} < h_{i,i} $ for $1 \leq i < j\leq n$.
    \end{itemize}
\end{definition}
For ``randomly" generated matrices of high dimensions, the diagonals of its HNF are highly unbalanced: most of them are small (in fact most of them are 1), and the first diagonal is usually the only large one. We can define a very special HNF.
\begin{definition}[Simple-HNF]
A nonsingular matrix $\bm{H}\in \mathbb{Z}^{n\times n}$ is said to be in simple-HNF if it is in HNF and $h_{i,i}=1$ for $2\leq i\leq n$.
\end{definition}
It has been shown that asymptotically the density of ``simple-HNF" for randomly generated $n\times n$ matrices (in the sense that each entry is chosen uniformly at random from $\{-M,-M+1,\cdots, M\}$ for large enough $M$) is $\frac{1}{\prod_{j=2}^{n}\zeta(j)}\approx 44\%$\cite{Hu2016,Maz2011}, where $\zeta(j)$ is the Riemann zeta function.

\subsection{Lattice}
A lattice is a discrete subgroup of $\mathbb{R}^m$. Formally,
\begin{definition}[Lattice]
Given $n$ linearly independent vectors $\bm{B}=\{\bm{b}_{1},\bm{b}_{2},\cdots,\bm{b}_{n}\}$, where $\bm{b}_{i}\in \mathbb{R}^m$, the lattice $\mathcal{L}(\bm{B})$ generated by $\bm{B}$ is defined as following $$\mathcal{L}(\bm{B})=\{\sum_{i=1}^{n} x_{i}\bm{b}_{i}:x_{i}\in \mathbb{Z}\}=\{\bm{x}\bm{B}:\bm{x}\in\mathbb{Z}^{n}\}.$$
\end{definition}
We call $\bm{B}$ a basis of $\mathcal{L}(\bm{B})$, $m$ and $n$ the dimension and the rank of $\mathcal{L}(\bm{B})$ respectively. When $m=n$, we say $\mathcal{L}(\bm{B})$ is full-rank.

\begin{definition}[Determinant]
For lattice $\mathcal{L}(\bm{B})$, the determinant is defined as $$\det(\mathcal{L})=\sqrt{\det(\bm{BB}^T)}.$$
\end{definition}
When $\mathcal{L}(\bm{B})$ is full-rank, $\bm{B}$ is nonsingular and $\det(\mathcal{L})=|\det(\bm{B})|$.

\begin{definition}[Primitive Lattice Vector]
A lattice vector $\bm{v}\in\mathcal{L}$ is called a primitive lattice vector if for any integer $k>1$, $\bm{v}/k \notin \mathcal{L}$.
\end{definition}
There is an easy criterion for determining whether a lattice vector is primitive.
\begin{proposition}
Given a lattice $\mathcal{L}$ with basis $\bm{B}$, $\bm{v}=\bm{xB} \in \mathcal{L}$ is primitive if and only if $\gcd(x_1,\cdots,x_n)=1$.
\end{proposition}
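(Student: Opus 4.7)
The plan is to prove the two implications separately, using the uniqueness of the representation of a lattice vector in the basis $\bm{B}$ as the central tool. Recall that because the rows of $\bm{B}$ are linearly independent (as $\bm{B}$ is a basis), the map $\bm{x}\mapsto\bm{xB}$ is injective, so each $\bm{v}\in\mathcal{L}$ determines its integer coordinate vector $\bm{x}$ uniquely.

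For the $(\Rightarrow)$ direction I would argue by contrapositive. Assume $d:=\gcd(x_1,\dots,x_n)>1$. Then $\bm{x}/d\in\mathbb{Z}^n$, and $(\bm{x}/d)\bm{B}=\bm{v}/d\in\mathcal{L}$, so $\bm{v}$ fails to be primitive by Definition of primitive lattice vector. This contradicts the primitivity assumption, giving $\gcd(x_1,\dots,x_n)=1$.

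For the $(\Leftarrow)$ direction, suppose $\gcd(x_1,\dots,x_n)=1$ and assume for contradiction that there is an integer $k>1$ with $\bm{v}/k\in\mathcal{L}$. Then $\bm{v}/k=\bm{yB}$ for some $\bm{y}\in\mathbb{Z}^n$, so $\bm{xB}=\bm{v}=k\bm{yB}=(k\bm{y})\bm{B}$. By the uniqueness of coordinates in the basis $\bm{B}$, we conclude $\bm{x}=k\bm{y}$, whence $k\mid x_i$ for every $i$ and therefore $k\mid\gcd(x_1,\dots,x_n)=1$, contradicting $k>1$. Hence no such $k$ exists and $\bm{v}$ is primitive.

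There is no real obstacle in this proof: both directions reduce to the single algebraic fact that the coordinate vector $\bm{x}$ of $\bm{v}=\bm{xB}$ is uniquely determined by $\bm{v}$. The only thing to be careful about is to invoke linear independence of the basis rows explicitly when cancelling $\bm{B}$ from $\bm{xB}=(k\bm{y})\bm{B}$, since the argument would fail for a generating set that is not a basis.
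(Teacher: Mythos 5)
Your proof is correct: both directions follow from the unique representation of a lattice vector in a basis, and you invoke linear independence at exactly the right point when cancelling $\bm{B}$. The paper states this proposition without proof (it is presented as a known, easy criterion), so there is no argument to compare against; yours is the standard one and fills the gap cleanly.
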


\subsection{Ideal Lattice}
In what follows, we focus on $ R=\mathbb{Z}[x]/\left \langle f(x)\right \rangle$, where $f(x)\in \mathbb{Z}[x]$ is a monic polynomial of degree $n$ and  $\left \langle f(x)\right \rangle$ is the ideal generated by $f(x)$ in $\mathbb{Z}[x]$.

Consider the coefficient embedding
\begin{align*}
  \sigma:  R  &\rightarrow \mathbb{Z}^n,  \\
        \sum_{i=0}^{n-1} a_{i}x^{i} &\mapsto (a_{0},a_{1},\cdots,a_{n-1}).
\end{align*}

For any polynomial $v(x)\in R$, we use $\bm{v}$ to denote the image of $v(x)$ under $\sigma$, and consider them equivalent.

Let $g(x)$ be a polynomial of degree $< n$ and $v(x)\in \left \langle g(x)\right \rangle$, then there exists a polynomial $w(x)\in \mathbb{Z}[x]$ of degree $< n$ such that $v(x)=w(x)g(x)\mod f(x)$. Write $w(x)=w_{n-1}x^{n-1}+\cdots+w_{1}x+w_{0}$, then $v(x)=\sum_{i=0}^{n-1} w_{i}x^ig(x)\mod f(x)$. So each element in $\left \langle g(x)\right \rangle$ is a linear combination of $g(x)\mod f(x)$, $xg(x)\mod f(x)$, $\cdots$, $x^{n-1}g(x)\mod f(x)$. It follows that $\left \langle g(x)\right \rangle$ is a lattice under the coefficient embedding. Thus, we can define
\begin{definition}[Ideal Lattice]
For $ g(x) \in R=\mathbb{Z}[x]/\left \langle f(x)\right \rangle$, where $f(x)\in \mathbb{Z}[x]$ is a monic polynomial of degree $n$, the ideal generated by $g(x)$ forms a lattice $\mathcal{L}$ under the coefficient embedding. We call $\mathcal{L}$ the ideal lattice generated by $g(x)$.
\end{definition}

Moreover, if $g(x)$ and $f(x)$ are coprime over $\mathbb{Q}$ (hence over $\mathbb{Z}$ since $f(x)$ is monic), $g(x)\mod f(x),xg(x)\mod f(x),\cdots,x^{n-1}g(x)\mod f(x)$ are linearly independent. Otherwise there exist integers $y_{0},y_{1},\cdots,y_{n-1}$  not all zero, such that $\sum_{i=0}^{n-1} y_{i}x^ig(x)\mod f(x)=0$, that is, $(\sum_{i=0}^{n-1}y_{i}x^i)g(x)=0\mod f(x)$, which indicates that $g(x)$ and $f(x)$ are not coprime. Therefore, the ideal lattice generated by $g(x)$ is full-rank.

\subsection{Resultant}
The resultant of two polynomials is defined as
    \begin{definition}[Resultant]
     Let $a(x)=a_{m}x^m+\cdots+a_{1}x+a_{0}$, $b(x)=b_{n}x^n+\cdots+b_{1}x+b_{0}$ $\in \mathbb{R}[x]$. Define the Sylvester matrix of $a(x)$ and $b(x)$ as
    $$\mbox{Sylv}(a(x),b(x))=\begin{bmatrix}
                                                                           a(x) \\
                                                                           \vdots \\
                                                                           x^{n-1}a(x) \\
                                                                           b(x) \\
                                                                           \vdots \\
                                                                           x^{m-1}b(x)
                                                                         \end{bmatrix} =\begin{bmatrix}
       a_{0} & a_1 & \cdots & a_{m}& &  &  &  \\
        & a_{0} & a_1 & \cdots & a_{m}&  &  &  \\
        &   & \ddots  &  & &\ddots  & & \\
        &  &  & a_{0} & a_1 &\cdots & & a_{m}\\
       b_{0} & &\cdots & &b_{n} & & & \\
        &  \ddots& && & &\ddots & \\
        &  &  b_{0}& &  \cdots & &  &b_{n}
     \end{bmatrix}_{(n+m)\times(n+m)}.$$
     Then the resultant of $a(x)$ and $b(x)$, denoted as $Res(a,b)$, is the determinant of $\mbox{Sylv}(a(x),b(x))$.
    \end{definition}

\section{Key Generation Algorithms of Gentry's FHE Scheme}
In this section we revisit the key generation algorithms implementing Gentry's fully homomorphic encryption scheme in \cite{SV10,GH11}.

\subsection{The Smart-Vercauteren's Key Generation}
    Smart and Vercauteren \cite{SV10} first attempted to implement Gentry's FHE schemes. For the key generation, they fixed a monic irreducible polynomial $f(x)\in \mathbb{Z}[x]$ of degree $n$. Then they repeatedly chose another polynomial $g(x)$ randomly from some set until the resultant of $f(x)$ and $g(x)$ was prime. They showed that the HNF of such ideal lattice had the following simple-HNF.
    \begin{center}
      $\begin{bmatrix}
        d & 0 & &0 &   &  & 0 \\
        [-r]_{d}  & 1 & &0 &   & &  0 \\
        [-r^2]_{d}  & 0 & &1 &   &  & 0  \\
          &   &   & & \ddots&  &  \\
        [-r^{n-1}]_{d} & 0 & &0  &  &  & 1
      \end{bmatrix}.$
    \end{center}
    where $[x]_{d}=\ x\mod d$. Such lattices can be represented implicitly by two integers $d$ and $r$, thus the key size is reduced by a factor of $n$.

    However, they failed to implement the FHE scheme since they were unable to generate keys for lattice of dimension $n>2048$. One of the main obstacles lies in the inefficiency of generating ideal lattices with prime determinant by trial and error.

\subsection{The Gentry-Halevi's Key Generation}
    In \cite{GH11}, Gentry and Halevi found that it was not necessary for the determinant to be prime. Instead, they showed that the scheme can go through as long as the ideal lattice has an odd determinant and a simple-HNF. The key generation consists of the following steps.

    \begin{enumerate}
      \item[(i)] Fix $f(x)=x^n+1$ with $n$ a power of $2$. Then they choose a vector $\bm{v}=(v_{0},\cdots,v_{n-1})$, where each entry $v_i$ is chosen at random as a $t$-bit integer. Consider the ideal lattice generated by $v(x)=v_{n-1}x^{n-1}+\cdots+v_{1}x+v_{0}$ in $\mathbb{Z}[x]/\left \langle f(x)\right \rangle$. Then the lattice is generated by the following basis
          \begin{center}
            $\bm{V}=\begin{bmatrix}
            v_{0} & v_{1} & v_{2} &   & v_{n-1} \\
            -v_{n-1} & v_{0} & v_{1} &   & v_{n-2} \\
            -v_{n-2} & -v_{n-1} & v_{0} &   & v_{n-3} \\
            &   &   & \ddots&   \\
            -v_{1}& -v_{2}& -v_{3} &   & v_{0}
            \end{bmatrix}.$
          \end{center}

      \item[(ii)] Given $v(x)$ and $f(x)$, use their highly optimized algorithm to compute the resultant $d$ of $v(x)$ and $f(x)$ and the constant term $w_{0}$ of $w(x)$, where $w(x)=w_{n-1}x^{n-1}+\cdots+w_{1}x+w_{0}$ is the unique polynomial such that $v(x)w(x)=d\mod f(x)$.  If $d$ is odd, compute $w_1$ by applying their highly optimized algorithm to $(xv(x)\mod f(x))$ and $f(x)$.
      \item[(iii)] If $\gcd(w_{1},d)=1$, compute $r=\frac{w_{0}}{w_{1}}\mod d$, check whether $r^n=-1\mod d$. If so, they consider that the ideal lattice has the desired HNF, then they compute an odd coefficient $w_i$ via $w_{i}=rw_{i+1} \mod d$ and $w_0$, $w_1$ (The subscripts are modulo $n$.) and use $(d,r)$ as the public key and $w_{i}$ the secret key.
    \end{enumerate}

For the sake of clarity and comparison, the key generation algorithm is summarized in Algorithm \ref{gentry keygen}. We need to emphasize that there are some details missing in their description (Lines 3-5 and Lines 7-9 in Algorithm \ref{gentry keygen}) and Algorithm \ref{gentry keygen} is adopted from their source code.
    \begin{algorithm}\label{gentry keygen}
    \caption{Key Generation in \cite{GH11}}
    \LinesNumbered
    \KwIn{dimension $n$, bit length $t$.}
    \KwOut{pk=$(d,r)$, sk=$w_i$.}
    Choose a random $n$-dimensional vector $\vec{v}$, where each $v_{i}$ is a $t$-bit integer.\\
    Compute the resultant $d$ of $v(x)$ and $f(x)$, and the constant term $w_{0}$ of $w(x)$, where $w(x)v(x)=d\mod f(x)$.\\
    \If{$d$ is even}{Go to 1.}
    Compute the coefficient $w_{1}$ of $w(x)$.\\
    \If{$w_{1}$ has no inverse modulo $d$}{Go to 1.}
    Compute $r=\frac{w_{0}}{w_{1}}\mod d$.\\
    \eIf{$r^n\neq -1\mod d$}{Go to 1.\\}
    {Compute an odd $w_i$ via $w_{i}=rw_{i+1} \mod d$ and $w_0$, $w_1$. (The subscripts are modulo $n$.)\\
    Output: pk=$(d,r)$, sk=$w_i$.\\}
    \end{algorithm}

\subsection{Remarks on Gentry-Halevi's Key Generation}\label{remarks}
    The authors of \cite{GH11} solved the main issues that prevented the key generation in \cite{SV10} from being practical. They also gave many optimizations focusing on practical performance. However, we note that there are several natural questions regarding to their implementation.
    \begin{enumerate}
      \item In \cite{GH11}, the authors mentioned that ``it was observed by Nigel Smart that the HNF has the correct form whenever the determinant is odd and square-free. Indeed, in our tests this condition was met with probability roughly 0.5..." Since the determinants in the experiments are numbers with millions of bits, it is difficult to determine whether they are square-free or not. We believe that ``this condition" means ``the HNF has the correct form" rather than ``the determinant is odd and square-free". We also rerun the experiment to confirm our assertion.
      \item In step (iii), they did not mention what to do if $\gcd(w_{1},d) \neq 1$. Generally, there are two options. We can start over by choosing another lattice, or we can compute another random coefficient of $w(x)$ until we find one that is coprime to $d$. (Judging from their code, they did the former.) In fact, we will show if $\gcd(w_{1},d) \neq 1$, then all the coefficients of $w(x)$ are not coprime to $d$. That means the latter is not optional.
      \item Also in step (iii), when $\gcd(w_{1},d) = 1$, we compute $r$, and check if $r^n=-1\mod d$ holds. Even if this equality held, they did not show that the HNF has the desired form. In other words, they did not provide a proof for the correctness of their algorithm. They did mention to check a serial of conditions that could guarantee the HNF is simple, however, in their implementation they actually tested only the last condition. It is vital for the FHE scheme to work correctly. We will give a simpler condition and by checking our condition holds, we can guarantee the HNF is indeed simple.
    \end{enumerate}


%

\section{Improved Key Generation for Gentry's FHE Scheme}
In this section, we present our improved key generation algorithm for Gentry's FHE scheme. Our improvements consist of two aspects:
\begin{itemize}
\item we can generate ideal lattices with odd determinant deterministically.
\item we present a rigorous proof for the correctness of our algorithm by checking a simpler sufficient and necessary condition for the ideal lattice having a simple-HNF.
\end{itemize}

Before presenting our key generation algorithm, we first give some useful lemmas and propositions on ideal lattices.

\subsection{Some Properties of Ideal Lattices}
The following lemma was already stated without proof in \cite{GH11}.
\begin{lemma}
Let $\mathcal{L}$ be the ideal lattice generated by $g(x) \in R=\mathbb{Z}[x]/\left \langle f(x)\right \rangle$, where $f(x)$ is a monic polynomial of degree $n$ and $g(x)$ is of degree $m$. Assume $g(x)$ is relatively prime to $f(x)$, then $\det(\mathcal{L})=|Res(g,f)|=|Res(f,g)|.$
\end{lemma}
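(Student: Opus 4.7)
The plan is to start from the Sylvester matrix $S = \mbox{Sylv}(g,f)$ of size $(n+m)\times(n+m)$ and, via integer elementary row operations, turn it into a block-triangular matrix whose top-left $n\times n$ block is a basis of $\mathcal{L}$ and whose bottom-right $m\times m$ block has determinant $1$.

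First I would lay out $S$ explicitly: the first $n$ rows hold the (unreduced) coefficient vectors of $x^i g(x)$ for $i=0,\dots,n-1$, and the last $m$ rows hold those of $x^j f(x)$ for $j=0,\dots,m-1$. Because $f$ is monic of degree $n$, a direct inspection of the coefficient positions shows that the bottom-right $m\times m$ block of $S$ is triangular with $1$'s on its diagonal (the leading coefficient $f_n=1$ appears once in each of these columns), so this block has determinant $1$.

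Next I would reduce each of the first $n$ rows modulo $f$ using the last $m$ rows. Since $f$ is monic, polynomial division yields $x^i g(x) = q_i(x) f(x) + r_i(x)$ with $q_i(x) \in \mathbb{Z}[x]$, $\deg q_i \le m-1$, and $\deg r_i < n$. Writing $q_i(x) = \sum_j q_{i,j} x^j$ and subtracting, for each $j$, $q_{i,j}$ times the row corresponding to $x^j f(x)$ from the row corresponding to $x^i g(x)$ is an elementary row operation, so $\det(S) = Res(g,f)$ is preserved. After these reductions, the top-right $n\times m$ block is zero (the $r_i$ have degree $< n$), and the top-left $n\times n$ block consists of the coefficient vectors of $x^i g(x) \bmod f(x)$ — that is, the standard basis of $\mathcal{L}$.

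By block-triangularity, $|\det(S)| = |\det(\text{top-left})| \cdot |\det(\text{bottom-right})| = \det(\mathcal{L}) \cdot 1$, giving $\det(\mathcal{L}) = |Res(g,f)|$. Finally, $|Res(f,g)| = |Res(g,f)|$ follows from the standard identity $Res(f,g) = (-1)^{mn} Res(g,f)$, obtained by permuting the two row-blocks of the Sylvester matrix. The main thing to watch carefully is the indexing in the row-reduction step and the verification that the bottom-right block is indeed unit-triangular; both steps use monicity of $f$ in an essential way, which is why no denominators appear and the reductions remain over $\mathbb{Z}$.
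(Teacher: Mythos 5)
Your proposal is correct and follows essentially the same route as the paper's proof: unimodular row reduction of the Sylvester matrix $\mbox{Sylv}(g,f)$ into a block-triangular form whose top-left block is the basis $\{x^i g(x) \bmod f(x)\}$ of $\mathcal{L}$, using monicity of $f$ to keep the operations over $\mathbb{Z}$. You merely spell out two details the paper compresses (the unit-triangularity of the bottom-right block, i.e.\ the factor $f_n^m=1$, and the sign relation $Res(f,g)=(-1)^{mn}Res(g,f)$), so there is nothing to correct.
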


\begin{proof}
Note that
$$Sylv(g(x),f(x))=\begin{bmatrix}
                   g(x) \\
                   \vdots \\
                   x^{n-1}g(x) \\
                   f(x) \\
                   \vdots \\
                   x^{m-1}f(x)
                   \end{bmatrix} = \begin{bmatrix}
       g_{0} & g_1 & \cdots & g_{m}& &  &  &  \\
        & g_{0} & g_1 & \cdots & g_{m}&  &  &  \\
        &   & \ddots  &  & &\ddots  & & \\
        &  &  & g_{0} & g_1 &\cdots & & g_{m}\\
       f_{0} & &\cdots & &f_{n} & & & \\
        &  \ddots& && & &\ddots & \\
        &  &  f_{0}& &  \cdots & &  &f_{n}
     \end{bmatrix}.$$
Since $f_n=1$, the Sylvester matrix can always be transformed unimodularly into the following (block-triangular) form by adding proper multiples of rows in the lower half to each row on the top half,
$$\begin{bmatrix}
        &  & \bm{B}_{n\times n} &  &  & \bm{0} & & \\\hline
       f_{0} & &\cdots & &f_{n} & & & \\
        &  \ddots& && & &\ddots & \\
        &  &  f_{0}& &  \cdots & &  &f_{n}
     \end{bmatrix}=\begin{bmatrix}
           g(x) \mod f(x) \\
           \vdots \\
           x^{n-1}g(x) \mod f(x)\\\hline
           f(x) \\
           \vdots \\
           x^{m-1}f(x)
           \end{bmatrix}.$$
    It follows that $Res(g,f) = f_n^m\det(\bm{B})=\det(\bm{B})$. \\
    Since $g(x)$ is relatively prime to $f(x)$, $\mathcal{L}$ is a full-rank lattice with basis $\bm{B}$. Therefore, $\det(\mathcal{L})= |\det(\bm{B})| = |Res(g,f)| = |Res(f,g)|.$\qed
    \end{proof}

The next lemma shows that the HNF of an ideal lattice has a very special structure.
    \begin{lemma}\label{divide}
    Let $\mathcal{L}$ be the ideal lattice generated by $g(x) \in R=\mathbb{Z}[x]/\left \langle f(x)\right \rangle$, where $f(x)$ is a monic polynomial of degree $n$ and is relatively prime to $g(x)$. Then the Hermite Normal Form of $\mathcal{L}$
      \begin{center}
      $\bm{H}=\begin{bmatrix}
      h_{1,1} & &  &  &  \\
      h_{2,1} & h_{2,2} & & & \\
      \vdots &  \vdots & \ddots& &  \\
      h_{n,1} & h_{n,2}&  \cdots& & h_{n,n}
    \end{bmatrix}$
    \end{center}
      satisfies $h_{i,i}|h_{j,l}$, for $1\leq l\leq j\leq i\leq n$.
    \end{lemma}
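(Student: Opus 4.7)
My plan is to rephrase the claim as the inclusion $\mathcal{L}_i \subseteq h_{i,i}\mathbb{Z}^n$ for every $i$, where $\mathcal{L}_i$ denotes the sublattice of vectors of $\mathcal{L}$ whose last $n-i$ coordinates vanish (so $\mathcal{L}_i$ is generated by $\bm{H}_1,\dots,\bm{H}_i$). This is equivalent to the lemma: once the inclusion holds for every $i$, plugging in $\bm{H}_j \in \mathcal{L}_j \subseteq \mathcal{L}_i$ for $j \le i$ recovers $h_{i,i} \mid h_{j,l}$ for every $l \le j$. I would prove the inclusion by descending induction on $i$ from $n$ down to $1$, using essentially that $\mathcal{L}$ is closed under multiplication by $x$ modulo $f(x)$.

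For the base case $i=n$, I would show $\mathcal{L} \subseteq h_{n,n}\mathbb{Z}^n$ by a second, inner descending induction on the coordinate index $l$. The base $l=n-1$ is immediate from the HNF definition. For the inductive step, the ideal property gives $\bm{w} := x\bm{v} \bmod f(x) \in \mathcal{L}$, and a direct reduction yields $w_{l+1} = v_l - v_{n-1} f_{l+1}$; since both $w_{l+1}$ (by the inner hypothesis applied to $\bm{w}$) and $v_{n-1}$ (by the HNF applied to $\bm{v}$) are multiples of $h_{n,n}$, so is $v_l$.

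For the inductive step $i+1 \to i$ with $i < n$, the key device is a ``tethered shift'' $T$ that preserves $\mathcal{L}_i$. Given $v(x) \in \mathcal{L}_i$, the polynomial $xv(x)$ has degree $\le i < n$ and lies in $\mathcal{L}_{i+1}$, so the outer hypothesis gives $h_{i+1,i+1} \mid v_{i-1}$; setting $t := v_{i-1}/h_{i+1,i+1} \in \mathbb{Z}$ and $T(v) := xv - t\cdot h_{i+1}(x)$ produces an element of $\mathcal{L}_i$ by construction. Applying the outer hypothesis a second time --- now to the row polynomial $h_{i+1}(x) \in \mathcal{L}_{i+1}$ itself --- gives $h_{i+1,l+1} = h_{i+1,i+1}\beta_{l+1}$ with $\beta_{l+1} \in \mathbb{Z}$, so the correction term in $T(v)_l$ simplifies to $v_{i-1}\beta_{l+1}$; since $v_{i-1}$ is the leading coefficient of an element of $\mathcal{L}_i$ and hence a multiple of $h_{i,i}$, one obtains $T(v)_l \equiv v_{l-1} \pmod{h_{i,i}}$. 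Iterating, $T^k(v)_{i-1} \equiv v_{i-1-k} \pmod{h_{i,i}}$, and since the left-hand side is the leading coefficient of an element of $\mathcal{L}_i$ and hence divisible by $h_{i,i}$, letting $k$ range over $0,\dots,i-1$ gives $h_{i,i} \mid v_j$ for every coordinate $j$.

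The main technical point is designing the operator $T$: naive multiplication by $x$ pushes $v$ out of $\mathcal{L}_i$ (its degree jumps to $i$), so one has to cancel the overflow using $h_{i+1}(x)$. The outer hypothesis enters twice --- once to guarantee that the cancellation coefficient $t$ is integral, and once more to ensure that the substituted correction vanishes modulo $h_{i,i}$, which is what makes the telescoping in the iteration work.
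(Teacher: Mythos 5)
Your argument is correct, but it takes a genuinely different route from the paper's. The paper proves the lemma by \emph{ascending} induction on the diagonal index: assuming $h_{k,k}$ divides every entry of rows $1,\dots,k$, it multiplies the $k$-th row polynomial by $x$, writes the result as an integer combination $\sum_{j\le k+1}y_j\bm{H}_j$, and solves the resulting triangular system to express each $h_{k+1,l}$ as $h_{k+1,k+1}$ times a quotient that the induction hypothesis certifies is an integer; the base case $i=1$ is trivial, and the chain $h_{k+1,k+1}\mid h_{k,k}$ drops out of the first equation of that system. You instead recast the claim as the containment $\mathcal{L}_i\subseteq h_{i,i}\mathbb{Z}^n$ and run the induction \emph{downward} from $i=n$: your base case (that $h_{n,n}$ divides every coordinate of every lattice vector) is itself nontrivial and needs its own inner induction exploiting the explicit reduction of $x^n$ modulo the monic $f$, while your inductive step replaces the paper's system-solving by the tethered shift $T$. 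Both proofs rest on the same single structural input---closure of the ideal under multiplication by $x$---but yours quantifies over all of $\mathcal{L}_i$ rather than just over the HNF rows, which buys you divisions that are always justified by a lattice containment instead of by inspecting a quotient, at the cost of a harder base case. I checked the details of your step (integrality of $t$, the fact that $T$ maps $\mathcal{L}_i$ into $\mathcal{L}_i$, the congruence $T(v)_l\equiv v_{l-1}\pmod{h_{i,i}}$, and the telescoping $T^k(v)_{i-1}\equiv v_{i-1-k}\pmod{h_{i,i}}$) and they all go through.
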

    \begin{proof}
     We prove the lemma by induction on $i$.\\
     For $i=1$, it is trivial.\\
     For $i=2$, consider the first row, which corresponds to the constant polynomial $h_{1,1}$. Since $\mathcal{L}$ is an ideal lattice, the vector $(0,h_{1,1},0\cdots,0)$, which corresponds to the polynomial $h_{1,1}x$, is also in $\mathcal{L}$.\\ It's obvious that $(0,h_{1,1},0\cdots,0)=x_{1}\bm{H}_{1}+x_{2}\bm{H}_{2}$ for some $x_{1},x_{2} \in \mathbb{Z}.$
     Then $h_{1,1}=x_{2}h_{2,2}$, $x_{1}h_{1,1}+x_{2}h_{2,1}=0$. Hence $h_{2,2}|h_{1,1},h_{2,2}|h_{2,1}$, which completes the proof for $i=2$.

    Assume the result holds for $i\leq k \leq n-1$, $h_{i,i}|h_{j,l}, $ where $1\leq l \leq j\leq i\leq k.$  We show that for $i=k+1$, $h_{k+1,k+1}|h_{j,l}$.

    Consider the $k$-th row. The corresponding polynomial of $k$-th row is $$h_{k,k}x^{k-1}+h_{k,k-1}x^{k-2}+\cdots+h_{k,2}x+h_{k,1}.$$ After multiplying $x$, we get a vector $ (0,h_{k,1},\cdots,h_{k,k},0,\cdots,0),$ which is a linear combination of $\bm{H}_{1},\cdots,\bm{H}_{k+1}$ with integer coefficients, i.e.
    \begin{center}
      $(0,h_{k,1},\cdots,h_{k,k},0,\cdots,0)$=$\sum_{i=1}^{k+1} y_{i}\bm{H}_{i}$,
    \end{center}
    where $y_{i} \in \mathbb{Z}$, for $i=1,\cdots,k+1.$

    So
    \begin{equation}\label{set1}
      \begin{cases}
        h_{k,k}&=y_{k+1}h_{k+1,k+1}\\
            h_{k,k-1}&=y_{k}h_{k,k}+y_{k+1}h_{k+1,k}\\
             &\vdots\\
            h_{k,1}&=\sum_{i=2}^{k+1} y_{i}h_{i,2}\\
            0&=\sum_{i=1}^{k+1} y_{i}h_{i,1}
      \end{cases}.
    \end{equation}

 From the first equation, we get $y_{k+1}=\frac{h_{k,k}}{h_{k+1,k+1}}$ and
  \begin{equation*}
    \begin{cases}
         h_{k+1,k}&=\frac{h_{k,k-1}-y_{k}h_{k,k}}{h_{k,k}}h_{k+1,k+1}\\
         h_{k+1,k-1}&=\frac{h_{k,k-2}-y_{k-1}h_{k-1,k-1}-y_{k}h_{k,k-1}}{h_{k,k}}h_{k+1,k+1}\\
            &\vdots\\
         h_{k+1,2}&=\frac{h_{k,1}-\sum_{i=2}^{k}{y_{i}h_{i,2}}}{h_{k,k}}h_{k+1,k+1}\\
         h_{k+1,1}&=\frac{-\sum_{i=1}^{k}{y_{i}h_{i,1}}}{h_{k,k}}h_{k+1,k+1}
         \end{cases}.
  \end{equation*}

  From the induction hypothesis, we have $h_{k,k}|h_{j,l}$ for $1\leq l\leq j\leq k \leq n.$ So the coefficient of $h_{k+1,k+1}$ in each equation is in fact an integer, therefore $h_{k+1,k+1}|h_{k+1,l}$, $1\leq l\leq k+1.$
  Since $h_{k+1,k+1}|h_{k,k}$, we know $h_{k+1,k+1}|h_{j,l}$, where $1\leq l \leq j \leq k+1 \leq n.$ Thus, the result holds for $i=k+1$.

  By the principle of induction, the lemma follows.\qed
\end{proof}

 \begin{remark}
   A part of the result has already been proven in \cite{Ding} for identifying ideal lattices. In fact, they proved that the diagonal entries of an HNF form a division chain, that is,  $h_{n,n}|\cdots|h_{2,2}|h_{1,1}.$ Our results show that off-diagonal entries are also divisible by the corresponding diagonal entry in the same row, which means the primitive part of the polynomial\footnote{The primitive part of an integer polynomial $s(x)$ is $s(x)/r$, where $r$ is the g.c.d of the coefficients of $s(x)$.} corresponding to each row is monic.
 \end{remark}

  \begin{remark}\label{HNFcharacterization}
   For the HNF of a full-rank lattice, the first row is \textit{the primitive lattice vector} of the form $(d,0,\cdots,0)$, where $d>0$ is a factor of the determinant. In the case of ideal lattice, the first row corresponds to the smallest positive constant polynomial in the ideal. This characterization will be useful for proving Proposition \ref{checking condition}.
 \end{remark}

\begin{lemma}\label{root}
  Let the notations be the same as in Lemma \ref{divide} and let $H_{i}(x)$ denote the corresponding polynomial of $i$-th row in the HNF, $i\leq n$. Let $\beta$ be the root of $h_{2,1}+h_{2,2}x$. Then $H_{i}(\beta)$ =0 $\mod \frac{h_{1,1}h_{i,i}}{h_{2,2}}$, $\forall i \geq 2$ and $f(\beta)=0\mod \frac{h_{1,1}}{h_{2,2}}$.
\end{lemma}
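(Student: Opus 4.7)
The plan is to prove both congruences by induction on $i$, after a preliminary observation that $\beta$ is actually an integer: applying Lemma \ref{divide} with $i=j=2$, $l=1$ gives $h_{2,2}\mid h_{2,1}$, so $\beta = -h_{2,1}/h_{2,2}\in\mathbb{Z}$ and the congruences make sense. Throughout I write $h := h_{1,1}/h_{2,2}$, which is a positive integer since $h_{2,2}\mid h_{1,1}$, and $\tilde H_i(x) := H_i(x)/h_{i,i}$, which by Lemma \ref{divide} is a monic integer polynomial of degree $i-1$. The main statement then reduces to the cleaner claim $\tilde H_i(\beta)\equiv 0\pmod h$ for $2\le i\le n$.

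The base case $i=2$ is immediate because $\tilde H_2(x)=x-\beta$. For the inductive step I would reuse the identity $xH_k(x)=\sum_{j=1}^{k+1} y_j H_j(x)$ (with $y_{k+1}=h_{k,k}/h_{k+1,k+1}$) that is already established inside the proof of Lemma \ref{divide}. Since Lemma \ref{divide} gives $h_{k,k}\mid h_{j,j}$ for every $j\le k$ and also $h_{k,k}\mid h_{1,1}$, dividing that identity through by $h_{k,k}$ yields
\[
\tilde H_{k+1}(x) \;=\; x\,\tilde H_k(x) \;-\; y_1\,\frac{h_{1,1}}{h_{k,k}} \;-\; \sum_{j=2}^{k} y_j\,\frac{h_{j,j}}{h_{k,k}}\,\tilde H_j(x),
\]
an identity with integer coefficients. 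Evaluating at $\beta$ and reducing modulo $h$, the inductive hypothesis kills every term on the right except possibly the lone term $y_1 h_{1,1}/h_{k,k}$.

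The main obstacle, and the step that I expect requires the sharpest use of Lemma \ref{divide}, is showing this last term is also divisible by $h$. Here I would invoke the divisibility chain $h_{n,n}\mid\cdots\mid h_{2,2}\mid h_{1,1}$ among the diagonal entries (the content of the first Remark following Lemma \ref{divide}): since $h_{k,k}\mid h_{2,2}$, I can rewrite $y_1 h_{1,1}/h_{k,k} = h\cdot y_1\cdot(h_{2,2}/h_{k,k})$, which is visibly in $h\mathbb{Z}$. This closes the induction, and multiplying $\tilde H_{k+1}(\beta)\equiv 0\pmod h$ by $h_{k+1,k+1}$ recovers the stated congruence $H_{k+1}(\beta)\equiv 0\pmod{h_{1,1}h_{k+1,k+1}/h_{2,2}}$.

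For the final assertion about $f(\beta)$, the idea is to derive an analogous identity from the ideal property applied to $H_n$. Because $xH_n(x)$ has degree $n$ with leading coefficient $h_{n,n}$, its reduction $xH_n(x)\bmod f(x)$ equals $xH_n(x) - h_{n,n}f(x)$ and lies in $\mathcal L$; writing it as an integer combination $\sum_{j=1}^n y'_j H_j(x)$ produces $h_{n,n}f(x) = xH_n(x) - \sum_{j=1}^n y'_j H_j(x)$ in $\mathbb{Z}[x]$. Evaluating at $\beta$, each term on the right is a multiple of $h\,h_{n,n}$: the $H_j(\beta)$ terms by the main claim just proved, and the $y'_1 h_{1,1}$ term by the same divisibility $h_{n,n}\mid h_{2,2}$ used before. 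Dividing by $h_{n,n}$ then yields $f(\beta)\equiv 0\pmod h$, as desired.
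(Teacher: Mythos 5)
Your proof is correct and follows essentially the same route as the paper's: the same induction on $i$ using the identity $xH_k(x)=\sum_{j=1}^{k+1}y_jH_j(x)$ inherited from the proof of Lemma \ref{divide}, with the $H_1$ term handled via $h_{k,k}\mid h_{2,2}$, and the same reduction of $xH_n(x)\bmod f(x)$ for the final assertion. The only difference is cosmetic — normalizing to the monic polynomials $\tilde H_i=H_i/h_{i,i}$ so that all congruences live modulo the single integer $h_{1,1}/h_{2,2}$ — which arguably makes the bookkeeping cleaner but does not change the argument.
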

\begin{proof}
   We prove the first equality by induction on $i$.

    For $i=2$, by the definition of $\beta,$ $H_{2}(\beta)=0\mod h_{1,1}.$

    Assume $H_{i}(\beta)=0\mod \frac{h_{1,1}h_{i,i}}{h_{2,2}},$ for $i\leq k,$ where $2\leq k \leq n-1.$  From Equation \ref{set1}, we have

    \begin{equation*}
    \begin{cases}
         h_{k,k}x^{k}&=y_{k+1}h_{k+1,k+1}x^{k}\\
         h_{k,k-1}x^{k-1}&=y_{k}h_{k,k}x^{k-1}+y_{k+1}h_{k+1,k}x^{k-1}\\
             &\vdots\\
         h_{k,1}x&=\sum_{i=2}^{k+1} y_{i}h_{i,2}x\\
         0&=\sum_{i=1}^{k+1} y_{i}h_{i,1}
         \end{cases}.
  \end{equation*}

    Sum the equations up,
    $$y_{k+1}H_{k+1}(x)+y_{k}H_{k}(x)+y_{k-1}H_{k-1}(x)+\cdots+y_{1}H_{1}(x)=xH_{k}(x),$$
    Set $x=\beta$,
    $$y_{k+1}H_{k+1}(\beta)+(y_{k}-\beta)H_{k}(\beta)+y_{k-1}H_{k-1}(\beta)+\cdots+y_{1}H_{1}(\beta)=0.$$

    Note that $H_{1}(x)=h_{1,1},\ H_{1}(\beta)=0\mod \frac{h_{1,1}h_{k,k}}{h_{2,2}}. $ By induction hypothesis, $H_{i}(\beta)=0\mod \frac{h_{1,1}h_{i,i}}{h_{2,2}}$ and $\frac{h_{1,1}h_{i+1,i+1}}{h_{2,2}}|\frac{h_{1,1}h_{i,i}}{h_{2,2}}$,  $y_{k+1}H_{k+1}(\beta)=0\mod \frac{h_{1,1}h_{k,k}}{h_{2,2}}.$
    Since $y_{k+1}=\frac{h_{k,k}}{h_{k+1,k+1}},$ we have $$H_{k+1}(\beta)=0\mod \frac{h_{1,1}h_{k+1,k+1}}{h_{2,2}}.$$

    Therefore for $i=k+1$, the equality also holds. Thus $H_{i}(\beta)$ =0 $\mod \frac{h_{1,1}h_{i,i}}{h_{2,2}}$, $\forall i \geq 2$.

    For the second equality, note that $xH_{n}(x)\mod f(x)=xH_{n}(x)- h_{n,n}f(x)$. Since the vector corresponding to $xH_{n}(x)-h_{n,n} f(x)$ is a lattice vector, there exist integers $z_1\cdots z_n \in \mathbb{Z}$, $xH_{n}(x)-h_{n,n} f(x)=\sum_{i=1}^{n}z_{i}H_{i}(x)$.

    Set $x=\beta$, $$\beta H_{n}(\beta)-h_{n,n} f(\beta)=\sum_{i=1}^{n}z_{i}H_{i}(\beta).$$ Since $h_{n,n}|h_{i,i}$ for all $i$, $H_{i}(\beta)$ =0 $\mod \frac{h_{1,1}h_{n,n}}{h_{2,2}}$, $\forall i \geq 2$. Also $H_{1}(\beta)$ =0 $\mod \frac{h_{1,1}h_{n,n}}{h_{2,2}}$. Then
    $h_{n,n} f(\beta)=0\mod \frac{h_{1,1}h_{n,n}}{h_{2,2}}$ and  $$f(\beta)=0\mod \frac{h_{1,1}}{h_{2,2}}.$$\qed
\end{proof}

\subsection{Generating Ideal Lattice with Odd Determinant Deterministically}
The key generation algorithm in \cite{GH11} needs to generate an ideal lattice with odd determinant. The original paper achieved this by trial and error. We first show that the parity of the determinant is related to the generator in a very simple way, and then give a deterministic way to generate an ideal lattice with odd determinant.

In the next two subsections, we fix $f(x)=x^n+1$ with $n$ a power of 2.

    \begin{proposition}\label{parity}
       Let $\mathcal{L}$ be the ideal lattice generated by $v(x)=v_{n-1}x^{n-1}+\cdots+v_{1}x+v_{0}$ in $\mathbb{Z}[x]/\left \langle f(x)\right \rangle$. Then $\det(\mathcal{L})\equiv v_{0}+v_{1}+\cdots +v_{n-1} \mod 2.$
    \end{proposition}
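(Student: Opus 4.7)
The plan is to reduce $\det(\mathcal{L})$ modulo $2$ by invoking Lemma~1 and exploiting the special factorization of $f(x)=x^n+1$ over $\mathbb{F}_2$. The assumption that $n$ is a power of $2$ is exactly what makes this work.

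First I would apply Lemma~1 to write $\det(\mathcal{L})=|\mathrm{Res}(v,f)|$, which is admissible because $f(x)=x^n+1$ is the cyclotomic polynomial $\Phi_{2n}(x)$ and hence irreducible over $\mathbb{Q}$, so the only degenerate case is $v=0$, where both sides of the claimed congruence are $0$. Since $f$ is monic its degree is preserved under reduction mod $2$, so the Sylvester matrix reduces entrywise to the Sylvester matrix over $\mathbb{F}_2$. Consequently $\mathrm{Res}(v,f)\bmod 2 = \mathrm{Res}(\bar v,\bar f)$ in $\mathbb{F}_2$, where bars denote reduction mod $2$.

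Next comes the key step: because $n$ is a power of $2$, the Frobenius/Freshman's-Dream identity in characteristic $2$ gives $\bar f(x)=x^n+1=(x+1)^n$ in $\mathbb{F}_2[x]$. Combining multiplicativity of the resultant in the second argument with the standard identity $\mathrm{Res}(g,x-c)=g(c)$ yields
\[
\mathrm{Res}(\bar v,(x+1)^n)=\mathrm{Res}(\bar v,x+1)^n=\bar v(-1)^n=\bar v(1)^n
\]
in $\mathbb{F}_2$ (where $-1=1$). Finally, $\bar v(1)=v_0+v_1+\cdots+v_{n-1}\bmod 2$, and every element $a\in\mathbb{F}_2$ satisfies $a^n=a$, so $\bar v(1)^n=\bar v(1)$. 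Chaining these equalities gives $\det(\mathcal{L})\equiv\sum_i v_i\pmod 2$, as required.

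The only step requiring any care is the legitimacy of reducing a resultant modulo $2$; this is the standard fact that reduction commutes with $\mathrm{Res}$ provided neither polynomial drops in degree, which is guaranteed here by the monicity of $f$. Everything else is a short chain of well-known identities. I expect this to be the cleanest route because it directly reuses Lemma~1; a purely matrix-based alternative (observing that $\bm V\bmod 2$ is the ordinary circulant of $v$, and then diagonalizing multiplication by $v$ in the local ring $\mathbb{F}_2[y]/y^n$ via the substitution $y=x+1$) would also succeed but would essentially re-prove Lemma~1 in this specific setting.
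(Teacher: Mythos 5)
Your proof is correct, and it reaches the conclusion by a genuinely different (though closely parallel) route. The paper does not pass through the resultant at all: it reduces the circulant basis matrix $\bm{V}$ modulo $2$, writes it as $v(\bm{P})$ for the cyclic shift matrix $\bm{P}$, observes that $\bm{P}^n=\bm{I}$ and $x^n+1=(x+1)^n$ over $\mathbb{F}_2$ force every eigenvalue of $\bm{P}$ to be $1$, and concludes $\det(\bm{V})\equiv v(1)^n\equiv v(1)\pmod 2$. You instead invoke Lemma~1 to get $\det(\mathcal{L})=|\mathrm{Res}(v,f)|$, commute reduction mod $2$ with the resultant, and use multiplicativity together with $\mathrm{Res}(\bar v,x+1)=\bar v(1)$. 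The two arguments are two dressings of the same computation ($\mathrm{Res}(v,f)=\prod_i v(\alpha_i)$ is exactly the product of the eigenvalues of $v(\bm{P})$), and both hinge on the identities $(x+1)^n=x^n+1$ and $a^n=a$ over $\mathbb{F}_2$. Your version buys a slightly cleaner formal footing: the paper's eigenvalue argument tacitly works over the algebraic closure since $\bm{P}$ is unipotent rather than diagonalizable over $\mathbb{F}_2$, whereas resultant multiplicativity sidesteps this. Two small imprecisions on your side, neither fatal: the commuting of reduction with $\mathrm{Res}$ needs only that $f$ keeps its degree and is monic (it does not, and need not, prevent $\bar v$ from dropping degree, so your phrase ``neither polynomial drops in degree'' overstates what monicity of $f$ gives you); and the case $\bar v=0$ should be noted separately, where both sides of the congruence are trivially $0$.
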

    \begin{proof}
    Since we only concern the parity of the determinant, we work over $\mathbb{F}_{2}$.
    Denote
    \[\bm{P}=\left[ {\begin{array}{*{20}{c}}
        0&1&{}&0&0\\
        0&0&{}&0&0\\
        {}&{}& \ddots &{}&{}\\
        0&0&{}&0&1\\
        1&0&{}&0&0
    \end{array}} \right].\]
    Then
    \begin{center}
      $\bm{V}=\begin{bmatrix}
        v_{0} & v_{1} & v_{2} &   & v_{n-1} \\
        -v_{n-1} & v_{0} & v_{1} &   & v_{n-2} \\
        -v_{n-2} & -v_{n-1} & v_{0} &   & v_{n-3} \\
          &   &   & \ddots&   \\
        -v_{1}& -v_{2}& -v_{3} &   & v_{0}
      \end{bmatrix} = {v_{0}}\bm{I} +  \cdots + {v_{n-1}}{\bm{P}^{n - 1}} = v(\bm{P}) \mbox{ over }\mathbb{F}_{2}.$
    \end{center}

    Now we compute the eigenvalues of $\bm{P}$.

    Since $\bm{P}$ is a cyclic shift matrix, $\bm{P}^n=\bm{I}$. Note that $x^n+1=(x+1)^n$ over $\mathbb{F}_{2}$ ($n$ is a power of $2$), then all the eigenvalues of $\bm{P}$ are 1. All the eigenvalues of $\bm{V}$ are thus $v(1)=v_{0}+v_{1}+\cdots+v_{n-1}$. Hence,
    $$\det(\mathcal{L})\equiv (v_{0}+v_{1}+\cdots+v_{n-1})^n \equiv v_{0}+v_{1}+\cdots+v_{n-1}\mod 2. $$\qed
    \end{proof}

    From the above proposition, we know how to generate ideal lattices with odd determinant deterministically. Specifically, we choose a random $n$-dimensional vector $\bm{v}$ from the set $$\{v(x) \in \mathbb{Z}[x]/\left \langle f(x)\right \rangle: v_i \text{\ is a random\ } t \text{-bit integer and \ } \sum_{i=0}^{n-1}v_i\equiv 1\mod 2 \}.$$  This can be done by choosing $v_i$ randomly, and use $v(x)+(v(1)+1\mod 2)$ as the generator or simply use $2u(x)+1$, where $u(x)$ is a random vector. Then the generated ideal lattice has an odd determinant.

    \subsection{A Simpler Condition for Checking the HNF}
     In this part, we give a simpler condition, which yields a rigorous proof for the correctness of our algorithm. Our condition is based on the following proposition.
    \begin{proposition}\label{checking condition}
      Let $\mathcal{L}$ be the ideal lattice generated by $v(x)$ in $\mathbb{Z}[x]/\left \langle f(x)\right \rangle$. Suppose $w(x)\in \mathbb{Z}[x]/\left \langle f(x)\right \rangle$ is a polynomial such that $v(x)w(x)=d \mod f(x)$ where $d$ is the determinant of $\mathcal{L}$. Then the following conditions are equivalent.
      \begin{enumerate}
      \item[(1)] $\mathcal{L}$ has a simple-HNF.
      \item[(2)] $\mathcal{L}$ contains a vector of the form $\vec{r}=(-r,1,0,\cdots,0)$.
      \item[(3)] There exists an index $i$, $ 0\leq i \leq n-1$, $\gcd(w_{i},d)=1$.
      \item[(4)] For arbitrary $0\leq i \leq n-1$, $\gcd(w_{i},d)=1$.
      \end{enumerate}
    \end{proposition}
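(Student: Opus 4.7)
The plan is to establish the four-way equivalence by proving $(1)\Leftrightarrow(2)$ directly, then closing the cycle $(1)\Rightarrow (4)\Rightarrow (3)\Rightarrow (1)$. Lemma \ref{divide} and Remark \ref{HNFcharacterization} are the key structural inputs.

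For $(1)\Rightarrow (2)$, the second row of a simple-HNF basis is already of the form $(h_{2,1},1,0,\ldots,0)$, so $r=-h_{2,1}$ works. For $(2)\Rightarrow (1)$, write the assumed lattice vector as $\vec{r}=\sum_{k=1}^n c_k\bm{H}_k$ with $c_k\in\mathbb{Z}$. Since $\bm{H}$ is lower triangular, reading off the $j$-th entry for $j=n,n-1,\ldots,3$ forces $c_n=c_{n-1}=\cdots=c_3=0$ by descending induction. The second entry then gives $c_2 h_{2,2}=1$, so $h_{2,2}=1$, and the divisibility chain $h_{n,n}\mid\cdots\mid h_{2,2}$ from Lemma \ref{divide} forces $h_{i,i}=1$ for all $i\geq 2$.

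The heart of the proof is the following key claim, used for both $(3)\Rightarrow (1)$ and $(1)\Rightarrow (4)$: for any prime $p\mid d$, one has $d/p\in\mathcal{L}$ if and only if $p$ divides every $w_i$. The ``if'' direction is immediate from $v(x)(w(x)/p)=d/p$. For ``only if'', if $d/p=v(x)c(x)$ in $R$, then $v(x)(pc(x)-w(x))=0$ in $R$; since $R$ is a torsion-free $\mathbb{Z}$-module and $v(x)$ is invertible in $R\otimes\mathbb{Q}$ (being coprime to $f$), $v(x)$ is a non-zero-divisor in $R$, so $pc(x)=w(x)$ and hence $p\mid w_i$ for every $i$. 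Combined with Remark \ref{HNFcharacterization}, this yields: $h_{1,1}=d$ iff no prime $p\mid d$ divides every $w_i$. For $(3)\Rightarrow (1)$, the single index $j$ with $\gcd(w_j,d)=1$ witnesses $p\nmid w_j$ for every $p\mid d$ simultaneously, so $h_{1,1}=d$; since $d=h_{1,1}\cdots h_{n,n}$ and each $h_{i,i}\geq 1$, this forces $h_{i,i}=1$ for $i\geq 2$.

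For $(1)\Rightarrow (4)$, invoke condition $(2)$: from $x-r=v(x)u(x)$, multiplying by $w(x)$ gives $(x-r)w(x)=du(x)$ in $R$, and expanding modulo $f(x)=x^n+1$ yields the cyclic relations $w_{i-1}\equiv rw_i\pmod d$ for $1\leq i\leq n-1$ together with $w_{n-1}\equiv -rw_0\pmod d$. Any prime $p\mid d$ that divides some $w_j$ therefore divides every $w_i$ by walking around the cycle; the key claim then gives $d/p\in\mathcal{L}$, contradicting $h_{1,1}=d$. So $\gcd(w_i,d)=1$ for all $i$, establishing $(4)$, and $(4)\Rightarrow (3)$ is trivial. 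The main obstacle I anticipate is the ``only if'' direction of the key claim: since $R$ is generally not a domain, the cancellation step must be justified via torsion-freeness of $R$ and invertibility of $v(x)$ over $\mathbb{Q}$. Once that non-zero-divisor property is in hand, the remainder is a coefficient chase enabled by the wrap-around $x^n=-1$ specific to $f(x)=x^n+1$.
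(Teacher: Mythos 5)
Your proof is correct, but it is organized quite differently from the paper's, which proves $(1)\Leftrightarrow(2)$ by citation to [GH11] and then closes the loop $(2)\Rightarrow(3)\Rightarrow(4)\Rightarrow(2)$. The shared core is the same: both derive the cyclic congruences $w_{i-1}\equiv rw_i\pmod d$ from $(x-r)w(x)\equiv 0$, and both use the observation that a common prime divisor of $d$ and all the $w_i$ produces the lattice constant $d/p$ and hence forces $h_{1,1}<d$ (the paper uses this, i.e.\ the easy ``if'' half of your key claim, inside $(2)\Rightarrow(3)$). What you do differently is twofold. First, you prove $(1)\Leftrightarrow(2)$ yourself rather than citing it, which makes the result self-contained. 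Second, and more substantively, you add the ``only if'' half of the key claim --- $d/p\in\mathcal{L}$ forces $p\mid w_i$ for all $i$, justified by torsion-freeness of $R$ and $v$ being a non-zero-divisor --- and this lets you prove $(3)\Rightarrow(1)$ in one step via ``$h_{1,1}=d$ iff no prime of $d$ divides every $w_i$'' together with $d=\prod_i h_{i,i}$. This completely bypasses the paper's most delicate step, $(3)\Rightarrow(4)$, which has to take the second HNF row as $(-\alpha r,\alpha,0,\dots,0)$, argue $\alpha^2\mid d$ so that $d/\alpha$ and $d$ share the same prime factors, and run the congruence chase modulo $d/\alpha$; in your routing the cyclic relations are only invoked in $(1)\Rightarrow(4)$, where the simple-HNF is already in hand and $\alpha=1$. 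A further small bonus of your $(3)\Rightarrow(1)$ argument is that it nowhere uses $f(x)=x^n+1$, whereas the paper's chain leans on the wrap-around $x^n=-1$ throughout; the price is the extra algebraic input (cancellation by $v$ in $R$), which you correctly identify and justify.
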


      \begin{proof}
    \hspace{0.1cm}
    \begin{enumerate}
      \item[(1)$\Leftrightarrow$(2)] The equivalence between the two conditions was proved in \cite{GH11}.\\

      \item[(2)$\Rightarrow$(3)] Assuming first that $\vec{r}=(-r,1,0,\cdots,0) \in \mathcal{L}$. Then there exists $y(x) \in \mathbb{Z}[x]/\left \langle f(x)\right \rangle$, such that $y(x)v(x)=r(x)\mod f(x)$. Therefore, $$y(x)v(x)w(x)=r(x)w(x)\mod f(x).$$ $$dy(x)=r(x)w(x)\mod f(x).$$
          Note that $f(x)=x^n+1$, we have $$(-w_{n-1},\cdots,w_{n-3},w_{n-2})-r(w_{0},\cdots,w_{n-2},w_{n-1})=0\ \mod\ d.$$
          So $w_{0}r=-w_{n-1} \mod d$ and $w_{i+1}r=w_{i} \mod d$, for $0\leq i \leq n-2$.\\
          We prove by contradiction that $\exists\ 0\leq i \leq n-1$, $\gcd(w_{i},d)=1$.\\ If for arbitrary $0\leq i \leq n-1$, $\gcd(w_{i},d)\neq 1$, let $\mu=\gcd(w_{0}, d) > 1$. From the relations among the $w_i$'s, we know $\mu$ divides all the $w_i$'s. Therefore, $\mu | \gcd(w_0, \cdots, w_{n-1}, d)$. Hence $\frac{d}{\mu}=\frac{w(x)}{\mu} v(x) \mod f(x)$ is a lattice vector, which means the first diagonal of the HNF is a proper factor of $d$. So the second diagonal can't be 1, otherwise the determinant is $\frac{d}{\mu}$ rather than $d$. This is a contradiction. Therefore $\exists\ 0\leq i \leq n-1$, $\gcd(w_{i},d)=1$.\\

      \item[(3)$\Rightarrow$(4)] Assume $\exists\ 0\leq i \leq n-1$, $\gcd(w_{i},d)=1$, fix $i$. We also prove by contradiction. Suppose there exists a $0\leq j \leq n-1$ such that $\mu=\gcd(w_{j}, d)> 1$.

           Due to Lemma \ref{divide}, we can assume the second row of HNF is $(-\alpha r,\alpha,0,\cdots,0)$, for some $\alpha \in \mathbb{N^{+}}$. Then $\alpha^2|\alpha h_{1,1}|d$ and $\gcd(\alpha, w_i)=1$. Similar to previous proof,
           $$\alpha(-w_{n-1},\cdots,w_{n-3},w_{n-2})-\alpha r(w_{0},\cdots,w_{n-2},w_{n-1})=0\ \mod\ d.$$
           Hence $$(-w_{n-1},\cdots,w_{n-3},w_{n-2})-r(w_{0},\cdots,w_{n-2},w_{n-1})=0\ \mod\ \frac{d}{\alpha}.$$
           According to the steps above, $w_{i+1}r=w_{i} \mod \frac{d}{\alpha}$, for $i \leq n-2$ and $w_{0}r=-w_{n-1} \mod \frac{d}{\alpha}$. Since $\alpha^2|d$, $\frac{d}{\alpha}$ and $d$ share exactly the same prime factors, hence $\gcd(w_{j},\frac{d}{\alpha})=\mu'>1$ . Similar to the previous proof, we have $\mu'$ divides all the coefficients of $w(x)$. Specifically, $\mu'|w_{i}$. Hence $\mu'|\gcd(w_{i},d)$, a contradiction. Thus $\gcd(w_{i},d)=1$ for any $0\leq i \leq n-1$.\\

      \item[(4)$\Rightarrow$(2)] Assume for any $0\leq i \leq n-1,\ \gcd(w_{i},d)=1$. Since $d=w(x)v(x)\mod f(x)$, $\gcd(w_0,\cdots,w_{n-1})|d$. Hence  $\gcd(w_0,\cdots,w_{n-1})=1$. Then $(d,0,\cdots,0)$ is a primitive lattice vector in $\mathcal{L}$. According to Remark \ref{HNFcharacterization}, it's the first row of the HNF of $\mathcal{L}$, which means all the other diagonals in the HNF are 1. Thus, $\mathcal{L}$ contains a vector (the second row of its HNF) of the form $\vec{r}=(-r,1,0,\cdots,0)$.
    \end{enumerate}
    \qed
    \end{proof}

    From the above proposition, we can conclude that in order to check whether the ideal lattice has a simple-HNF, we only need to compute a $w_i$ and $\gcd(w_i, d)$, which is simpler and more efficient to check. If $\gcd(w_i, d)=1$, then from the proposition, the lattice has a simple-HNF. Otherwise, the lattice does not have a simple-HNF.

    In what follows, we show that once $w_1$ has an inverse modulo $d$, it must hold that $r^n=-1\mod d$. Thus the checking can be safely left out.



\begin{proposition}
    Let $\mathcal{L}$ be the ideal lattice generated by $v(x)$ in $\mathbb{Z}[x]/\left \langle f(x)\right \rangle$. Suppose $w(x)\in \mathbb{Z}[x]/\left \langle f(x)\right \rangle$ is a polynomial such that $v(x)w(x)=d \mod f(x)$ where $d$ is the determinant of $\mathcal{L}$. Assume that $\gcd(w_1,d)=1$ and $r=\frac{w_0}{w_1}\mod d$, then $r^n=-1\mod d$.
\end{proposition}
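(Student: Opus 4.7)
The plan is to leverage Proposition~\ref{checking condition} to convert the hypothesis $\gcd(w_1,d)=1$ into both a structural fact (there is a lattice vector of the shape $(-r',1,0,\ldots,0)\in\mathcal{L}$) and an arithmetic fact ($\gcd(w_i,d)=1$ for every $i$). Once both are in hand, the identity $r^n\equiv -1\bmod d$ will follow by chaining a simple recursion on the coefficients of $w(x)$.

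First I would invoke Proposition~\ref{checking condition}: since $\gcd(w_1,d)=1$, condition (3) holds, hence so do (2) and (4). So there is some $r'\in\mathbb{Z}$ with $(-r',1,0,\ldots,0)\in\mathcal{L}$, and $\gcd(w_i,d)=1$ for all $0\le i\le n-1$. Following the computation already carried out in the proof of $(2)\Rightarrow(3)$, I would pick $y(x)\in\mathbb{Z}[x]/\langle f(x)\rangle$ with $y(x)v(x)\equiv x-r'\bmod f(x)$, multiply by $w(x)$, and use $v(x)w(x)\equiv d\bmod f(x)$ together with $f(x)=x^n+1$ to expand $(x-r')w(x)\bmod f(x)$ coefficientwise. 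Reducing modulo $d$ yields the recursion
\begin{equation*}
w_i \equiv r'\, w_{i+1}\ (\bmod\ d)\quad\text{for }0\le i\le n-2,\qquad w_{n-1}\equiv -r'\,w_0\ (\bmod\ d).
\end{equation*}

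Next I would identify $r'$ with the $r$ of the statement. From $w_0\equiv r' w_1\bmod d$ and $\gcd(w_1,d)=1$, $w_1$ is invertible modulo $d$, so $r'\equiv w_0/w_1\equiv r\bmod d$. Now I would iterate the first recursion to get
\begin{equation*}
w_0 \equiv r\, w_1 \equiv r^2\, w_2 \equiv \cdots \equiv r^{n-1}\, w_{n-1}\ (\bmod\ d),
\end{equation*}
and substitute the wrap-around relation $w_{n-1}\equiv -r\,w_0\bmod d$ to obtain $w_0\equiv -r^n w_0\bmod d$, i.e. $(r^n+1)w_0\equiv 0\bmod d$. Finally, because condition~(4) of Proposition~\ref{checking condition} gives $\gcd(w_0,d)=1$, $w_0$ is invertible modulo $d$ and can be cancelled, yielding $r^n\equiv -1\bmod d$ as required.

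The only genuinely delicate step is the coefficient expansion of $(x-r)w(x)\bmod (x^n+1)$ that produces the two recursions with the correct sign in the wrap-around slot; this is where the choice $f(x)=x^n+1$ (rather than a generic monic $f$) is essential, since the $-1$ in $x^n\equiv -1$ is exactly what will feed the sign change in $r^n\equiv -1$. Everything after that is routine substitution and cancellation, so I do not anticipate any further obstacle.
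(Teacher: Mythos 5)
Your proof is correct, but it takes a different route from the paper's. The paper's proof is a two-line corollary: it invokes Proposition~\ref{checking condition} to get the lattice vector $(-r,1,0,\cdots,0)$ and then applies Lemma~\ref{root} (with $\beta=r$, $h_{1,1}=d$, $h_{2,2}=1$) to conclude $f(r)\equiv 0\bmod d$, i.e.\ $r^n\equiv-1\bmod d$. You instead bypass Lemma~\ref{root} entirely: you extract the coefficient recursion $w_i\equiv r\,w_{i+1}\bmod d$ together with the wrap-around $w_{n-1}\equiv -r\,w_0\bmod d$ from $(x-r)w(x)\equiv 0\ (\bmod\ d,\ f(x))$ --- exactly the computation the paper already performs inside the proof of $(2)\Rightarrow(3)$ of Proposition~\ref{checking condition} --- then telescope to $w_0\equiv -r^n w_0\bmod d$ and cancel $w_0$ using condition (4). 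Both arguments rest on Proposition~\ref{checking condition} for the existence of $(-r,1,0,\cdots,0)$, and your identification of the $r'$ in that vector with $w_0 w_1^{-1}\bmod d$ is sound. What the paper's route buys is generality and brevity: Lemma~\ref{root} holds for any monic $f$ and any HNF shape, so the proposition falls out with no further computation. What your route buys is self-containedness and transparency: it avoids the inductive machinery of Lemma~\ref{root} and makes visible exactly where the $-1$ in $x^n\equiv-1$ enters, at the cost of being specific to $f(x)=x^n+1$. No gaps.
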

\begin{proof}
From the assumptions and the proof for Proposition \ref{checking condition}, we know that $(-r,1,0,\cdots,0)$ is in $\mathcal{L}$. According to Lemma \ref{root}, $r$ is a root of $f(x)=0\mod d$. Therefore $r^n = -1\mod d$.\qed
\end{proof}

\subsection{Our Key Generation Algorithm}

    We formally present our improved key generation algorithm (Algorithm \ref{our keygene}) and give some theoretical analyses and experimental results supporting our claims in this section.

    \begin{algorithm}\label{our keygene}
    \caption{Our Key Generation Algorithm}
    \LinesNumbered
    \KwIn{dimension $n$, bit length $t$.}
    \KwOut{pk=$(d,r)$, sk=$w_i$.}
    Choose a random vector $\vec{v}$ from $\{v(x)=\sum_{i=0}^{n-1}v_ix^i: v_i \text{\ is a random\ } t \text{-bit integer and \ } \sum_{i=0}^{n-1}v_i\equiv 1\mod 2 \}.$\\
    Compute the resultant $d$ of $v(x)$ and $f(x)$, and coefficient $w_{1}$ of the linear term of $w(x)$, where $w(x)v(x)=d\mod f(x)$.\\
    \eIf{$\gcd(w_{1},d)\neq 1$}{Go to 1.\\}{Compute $w_{0}$ and $r=\frac{w_{0}}{w_{1}}\mod d$.\\
    Compute an odd $w_i$ via $w_{i}=rw_{i+1} \mod d$ and $w_0$, $w_1$. (The subscripts are modulo $n$.)\\
    Output: pk=$(d,r)$, sk=$w_{i}$.\\}
    \end{algorithm}

\subsubsection{Theoretical Analyses}

We first give time estimations for both algorithms. Here we show a few notations to represent the time for different routines.
\begin{itemize}
\item $T_{res}$ : time to compute the resultant and $w_i$ using the algorithm in \cite{GH11}. Note that in both algorithms, we sometimes only need to compute the resultant or a single coefficient $w_i$, this will also take time $T_{res}$ since the exactly same routine is used to achieve these goals.
\item $T_{xgcd}$ : time to apply the extended Euclidean algorithm.
\item $T_{pmod}$ : time to compute $r^n\mod d$.
\item $T_{mul}$ : time to do one multiplication modulo $d$.
\item $T_{oddcoe}$ : time to compute an odd coefficient of $w(x)$ (line 14 in Algorithm \ref{gentry keygen} and line 7 in our algorithm).
\end{itemize}

For Algorithm \ref{gentry keygen}, since the probability of choosing an odd-determinant ideal lattice is 0.5, we need to try twice on average to get an odd-determinant ideal lattice, which costs time $2T_{res}$ in order to compute the determinants. At the same time, we have also computed a coefficient $w_{0}$. Afterwards, we need to compute another coefficient $w_{1}$, which costs time $T_{res}$. From the experiment, we found that the probability of an odd-determinant ideal lattice having simple-HNF is very high, we simply omit the failure cases. Then computing the inverse of $w_{1}$ and $r$ takes time $T_{xgcd}$ and $T_{mul}$ respectively, and checking if $r^n=-1\mod d$ takes time $T_{pmod}$. Finally we need to choose an odd coefficient of $w(x)$, which costs time $T_{oddcoe}$. So the expected running time of Algorithm \ref{gentry keygen} is $3T_{res}+T_{xgcd}+T_{pmod}+T_{mul}+T_{oddcoe}$.

In our algorithm, the ideal lattice always has an odd determinant. As before, we omit the failure cases where an odd-determinant ideal lattice doesn't have a simple-HNF, then we only need to compute two consecutive coefficients of $w(x)$, which costs $2T_{res}$. Then it costs time $T_{xgcd}$+$T_{mul}$ to perform an extended Euclidean algorithm and multiplication modulo $d$ to compute $r$. At last the time for choosing an odd coefficient of $w(x)$ is $T_{oddcoe}$. Thus the total time used in our key generation algorithm is $2T_{res}+T_{xgcd}$+$T_{mul}+T_{oddcoe}$.

In our experiments, we found that $T_{res}$ takes most proportion of the whole time. So on average, our algorithm is about 1.5 times faster.

\subsubsection{Experimental Results}
We present our experimental results in the following. Our experiments were performed on a PC (Intel(R) Core(TM) i7, 3.4GHz, 2G RAM) and based on Shoup's NTL library\cite{Sho} (version 9.10.0). NTL is installed with MPC (version 1.0.2), GMP (version 5.1.3), and MPFR (version 3.1.5).
\paragraph{1.} We rerun the experiments of \cite{GH11} to confirm our assertion in Section \ref{remarks}. We generated 100 random ideal lattices for each parameter set, and count the number of ideal lattices in different categories. In the table, the first row are the parameters $(n, t)$, like $(512, 380)$, where $n$ is the dimension of the ideal lattice and $t$ is the bit length of each coefficient in the generator. ``even (or odd) $d$" indicates the generated lattice has even (or odd) determinant, and ``(non-)SHNF" indicates the lattice has (doesn't have) a simple-HNF.

\begin{center}
\begin{tabular}{|c|c|c|c|c|c|c|c|c|}
 \hline
 \multirow{2}{*}{} & \multicolumn{2}{c|}{(512,380)} & \multicolumn{2}{c|}{(2048, 380)} & \multicolumn{2}{c|}{(8192,380)} & \multicolumn{2}{c|}{(32768, 380)}\\
 \cline{2-9} & Alg. \ref{gentry keygen} & Alg. \ref{our keygene} & Alg. \ref{gentry keygen} & Alg. \ref{our keygene} & Alg. \ref{gentry keygen} & Alg. \ref{our keygene} & Alg. \ref{gentry keygen} & Alg. \ref{our keygene}\\
 \hline
 {even $d$, SHNF} & 25  & 0 & 32 & 0 & 25 & 0 &24 & 0\\
 \hline
 {even $d$, non-SHNF}& 27  & 0 & 25 & 0 & 18 &0 & 25&0 \\
\hline
 {odd $d$, SHNF}& \textcolor{red}{48} & \textcolor{red}{98} & \textcolor{red}{42} & \textcolor{red}{98} & \textcolor{red}{57} & \textcolor{red}{100} & \textcolor{red}{47} & \textcolor{red}{90}\\
 \hline
 {odd $d$, non-SHNF}& 0 & 2 & 1 & 2 & 0 &0 & 4& 10\\
 \hline
\end{tabular}
\end{center}

From the experiment we can see that the probability of a randomly generated ideal lattice in Algorithm \ref{gentry keygen} having the correct form (odd-determinant and simple-HNF) is roughly 0.5 and that the failure cases is usually due to the determinant being even.

Also, there is something interesting here. When the ideal lattices have an even determinant, about half of them have a simple-HNF. It is not surprising if the probability of a random ideal lattice having simple-HNF is somewhat higher than $44\%$ due the divisibility of the diagonal entries of its HNF. However, when the ideal lattices have an odd determinant, most of them also have a simple-HNF. As a result, the probability of simple-HNF is about $75\%$, which is much higher than $44\%$. We conjecture this might has something to do with the underlying 2-power cyclotomic fields.
\paragraph{2.} We present some experimental results on the running time of both algorithms. In the experiments, we generated 20 valid keys and count the averaged time for the different parts in theoretical analysis. In the next table, the ``\#(trials)" column denotes the number of trials we did to get $20$ valid keys. $t_{res}$, $t_{xgcd}$, $t_{pmod}$, $t_{mul}$ and $t_{oddcoe}$ are averaged time for doing the corresponding operations, so $t_{res}\approx 3T_{res}$ in Algorithm \ref{gentry keygen}. $t_{total}$ is the averaged time to generate a valid key. All the timings are measured in seconds.
\begin{center}
  \begin{tabular}{|c|c|c|c|c|c|c|c|c|}
    \hline
    \multirow{2}*{dim $n$} & \multirow{2}*{bit-size $t$} & \multicolumn{7}{c|}{Gentry and Halevi's Algorithm} \\
    \cline{3-9}
     &  & \#(trials) & $t_{res}$ & $t_{xgcd}$ &$t_{pmod}$& $t_{mul}$ & $t_{oddcoe}$ & $t_{total}$  \\
     \hline
    512 & 380 & 37 & 0.171 & 0.048 & 0.054 & 0.007 & 0.003 & 0.284 \\
    \hline
    2048 & 380 & 46 & 1.585 & 0.354 & 0.347 & 0.037 & 0.009 & 2.336 \\
    \hline
    8192 & 380 & 36 & 12.068 & 2.120 & 2.040 & 0.173 & 0.086 & 16.500 \\
    \hline
    32768 & 380 & 46 & 123.152 & 14.128 & 14.494 & 0.962 & 0.799 & 153.591 \\
    \hline
    \multirow{2}*{dim $n$} & \multirow{2}*{bit-size $t$} & \multicolumn{7}{c|}{Our Algorithm}  \\
    \cline{3-9}
      &   & \#(trials) & $t_{res}$ & $t_{xgcd}$  & $t_{mul}$ & $t_{oddcoe}$ & $t_{total}$ & speedup \\
    \hline
    512 & 380 & 20 & 0.118 & 0.047 & 0.007 & 0.002 & 0.174 & \color{red}{1.632x}  \\
    \hline
    2048 & 380 & 20 & 0.966 & 0.351 & 0.038 & 0.036 & 1.393 & \color{red}{1.677x}  \\
    \hline
    8192 & 380 & 20 & 7.568 & 2.108 & 0.173 & 0.069 & 9.924 & \color{red}{1.663x} \\
    \hline
    32768 & 380 & 22 & 78.276 & 13.882 & 0.964 & 0.432 & 93.592 & \color{red}{1.641x} \\
    \hline
  \end{tabular}
\end{center}

We can see that the ratio between $t_{res}$ in Algorithm \ref{gentry keygen} and $t_{res}$ in Algorithm \ref{our keygene} is about 1.5, which matches the theoretical prediction ($3T_{res}/2T_{res}$). In two algorithms, $t_{xgcd}$ and $t_{mul}$ are almost the same and $t_{oddcoe}$ is very short. We note that this is because in most cases one of $w_0$ and $w_1$ is odd. From the experiments, we can see that our key generation is about 1.5 (more close to 1.6) times faster.

\section{Conclusion}

    In this paper, we made improvements on the key generation of \cite{GH11}. Using the properties of ideal lattice, we present an improved key generation algorithm with a rigorous proof for correctness. As a result, our algorithm is about 1.5 times faster.


\begin{thebibliography}{1111}

    \bibitem{retrieval1} Yi, X., Kaosar, M.G., Paulet, R., et al: Single-database private information retrieval from fully homomorphic encryption. IEEE Transactions on Knowledge and Data Engineering, Volume 25(5), 1125-1134 (2013)
    \bibitem{watermark} Li, Z., Zhu, X., Lian, Y., et al: Constructing secure content-dependent watermarking scheme using homomorphic encryption. IEEE International Conference on Multimedia and Expo, 627-630 (2007)
    \bibitem{AD97} Ajtai, M., Dwork, C.: A public-key cryptosystem with worst-case/average-case equivalence. Symposium on Theory of Computing, 284-293 (1997)
    \bibitem{BGV12} Brakerski, Z., Gentry, C., Vaikuntanathan, V.: (Leveled) fully homomorphic encryption without bootstrapping. ITCS, 309-325 (2012)
    \bibitem{MW15} Clear, M., McGoldrick, C.: Multi-identity and multi-key leveled FHE from learning with errors. International Cryptology Conference, Volume 9216(2), 630-656 (2015)
    \bibitem{Ding} Ding, J., Lindner, R.: Identifying ideal lattice. IACR Cryptology ePrint Archive, 322 (2007)
    \bibitem{Gen09} Gentry, C.: Fully homomorphic encryption using ideal lattices. Symposium on Theory of Computing, 169-178 (2009) 
    \bibitem{GH11} Gentry, C., Halevi, S.: Implementing Gentry's fully-homomorphic encryption scheme. EUROCRYPT, 129-148 (2011), code available online at http://researcher.watson.ibm.com/researcher/files/us-shaih/fhe-code.zip 
    \bibitem{GSW13} Gentry, C., Sahai, A., Waters, B.: Homomorphic encryption from learning with errors: conceptually-simpler, asymptotically-faster, attribute-based. CRYPTO, 75-92 (2013) 
    \bibitem{Hu2016} Hu, G., Pan, Y., Liu, R., Chen, Y.: On random nonsingular Hermite Normal Form. Journal of Number Theory, Volume 164, 66-86 (2016)

    \bibitem{Maz2011} Maze, G.: Natural density distribution of Hermite Normal Forms of integer matrices. Journal of Number Theory, Volume 131(12), 2398-2408 (2011)
    \bibitem{RAD78} Rivest, R., Adleman, L., Dertouzos, M.: On data banks and privacy homomorphisms. Foundations of Secure Computation, Volume 4(11), 169-180 (1978)
    \bibitem{RSA78} Rivest, R., Shamir A., Adleman L.: A method for obtaining digital signatures and public-key cryptosystems. Communications of the ACM, Volume 21(2), 120-126 (1978)
    \bibitem{Sho} Shoup, V.: NTL: a library for doing number theory, available online at http://www.shoup.net/ntl/
    \bibitem{SV10} Smart, N., Vercauteren, F.: Fully homomorphic encryption with relatively small key and ciphertext sizes. Public Key Cryptography, 420-443 (2010) 
    \bibitem{SS11} Scholl, P., Smart, N.P.: Improved key generation for Gentry's fully homomorphic encryption scheme. IMA International Conference on Cryptography and Coding, Volume 2011, 10-22 (2011)
    \bibitem{NTRU} Rohloff, K., Cousins, D.B.: A scalable implementation of fully homomorphic encryption built on NTRU. Financial Cryptography and Data Security, 221-234 (2014)
    \bibitem{DM15} Ducas, L., Micciancio, D.: FHEW: bootstrapping homomorphic encryption in less than a second. International Conference on the Theory and Applications of Cryptographic Techniques, 617-640 (2015)
    \bibitem{GHS15} Gentry, C., Halevi, S., Smart, N.P.: Homomorphic evaluation of the AES circuit. CRYPTO, 850-867 (2012), updated implementation version available online at https://eprint.iacr.org/2012/099.pdf





    \end{thebibliography}

\end{document}